\documentclass[aps,prx,reprint,superscriptaddress,nofootinbib,
               amsmath,amssymb,longbibliography]{revtex4-2}

\usepackage{amsmath,amssymb,amsthm,graphicx,bm}
\usepackage[colorlinks=true,linkcolor=blue,citecolor=blue,urlcolor=blue]{hyperref}

\theoremstyle{plain}
\newtheorem{theorem}{Theorem}

\newtheorem{lemma}[theorem]{Lemma}
\newtheorem{corollary}[theorem]{Corollary}
\theoremstyle{remark}

\newcommand{\Wu}{W^{u}}
\newcommand{\Wss}{W^{ss}}
\newcommand{\Cs}{C^{s}_{\rho}}

\newcommand{\Bu}{\mathcal{B}^{u}_{\nu}}

\newcommand{\Pin}{\mathcal{P}_{\nu}}

\newcommand{\Real}{\mathbb{R}}

\begin{document}

\title{An Arrow of Time in Stable Autonomous Generation}
\author{Zhixin Lu}
\affiliation{Allen Institute, Seattle, Washington 98109, USA}
\date{October 1, 2026}

\begin{abstract}
Does stable autonomous generation select a direction of time?
This question matters both for learning models that autonomously
generate prescribed dynamics and for designing controllers that
make an invariant set attracting without changing its dynamics.
Periodic motion and quasiperiodic torus rotations admit stable
generation in either direction. For a chaotic attractor, however,
reversing the equations reproduces its backward trajectories while
turning attraction into repulsion. Can additional latent variables
or autonomous feedback restore attraction without changing the
reversed motion? We prove an obstruction to such exact generation
by finite-dimensional autonomous systems with $C^2$ vector fields,
compact attracting sets, and continuous readouts. The result applies
to the classical Lorenz attractor and, more generally, to dynamics
with an ergodic invariant measure of positive metric entropy whose
intrinsic local stable fibers are totally disconnected on a set of
positive measure. The proof connects this geometry to an entropy
contradiction: a hypothetical attracting realization would force
the output entropy to vanish, contrary to the prescribed chaotic
dynamics. We use reservoir-computing examples to illustrate the
difficulty of stable autonomous generation for time-reversed chaos:
both time directions are closely tracked while the networks are
driven by data, but only the forward example retains Lorenz geometry
during autonomous generation. The theorem identifies a class of dynamics 
that cannot be learned for exact, stable autonomous generation 
by finite-dimensional smooth models satisfying the stated 
assumptions, regardless of the learning procedure or the number 
of latent variables.
\end{abstract}
\maketitle
\raggedbottom

\section{Introduction}
\label{sec:intro}

Machine-learning models can learn to generate complex temporal patterns,
from speech and music to chaotic dynamical signals. People likewise learn
to play melodies and coordinate intricate movements. Yet a pattern that
we readily produce in its familiar order can be difficult to perform
backward. Although experience, learning, and physical constraints may
explain this difficulty, it raises a question about the dynamics
themselves: can stable autonomous generation be feasible in only one
direction of time?

Earlier work investigates time irreversibility by comparing the
statistics of observed sequences with those of their
reversals~\cite{Pomeau1982,Diks1995}, including through directed
visibility graphs constructed from time series~\cite{Lacasa2012}.
These methods ask whether the two temporal orderings are statistically
distinguishable. We ask a different question: can dynamics that are generated 
stably in forward time also be generated stably in reverse by another 
dynamical system? Our theorem rules out such a possibility for a class 
of dynamics under the conditions stated below.

To formulate the question, let $\dot x=F(x)$ have flow $\phi_t$ and a
compact attractor $A$. Can another dynamical system stably generate the
reversed trajectories $\phi_{-t}(x)$ on $A$? Simply replacing $F$ by
$-F$ produces the desired trajectories but turns attraction into
repulsion. A different vector field might restore attraction while
preserving the reversed motion on $A$. For a limit cycle or a
quasi-periodic torus rotation, one can reverse the phase motion while
retaining contraction normal to the invariant manifold. Whether this
stabilization remains possible for a chaotic attractor is less clear.

Consider first adding a stabilizing correction in the original state space,
\[
\dot x=-F(x)+R(x),\qquad R|_A=0.
\]
Here $R$ is $C^1$ near $A$, and $R|_A=0$ preserves every reversed
trajectory on $A$. For a smooth limit cycle, such a correction can vanish
on the cycle while providing attraction toward it. The Lorenz attractor,
however, has a layered geometry. Its
strong-stable direction contracts forward in time and expands under
reversal~\cite{APPV2009}, suggesting where a correction might try to
restore contraction. The neighboring layers constrain this correction:
if distinct points $x_k\in A\setminus\{x\}$ approach $x$ with
\[
\frac{x_k-x}{\|x_k-x\|}\longrightarrow e,
\]
then $R(x_k)=R(x)=0$ and differentiability imply
\[
D_e R(x)
:= \left.\frac{d}{dh}R(x+he)\right|_{h=0}
= 0.
\]
Wherever a formerly contracting direction occurs as such an accumulation
direction, the correction term cannot supply additional linear attraction along
it at that point for the time-reversed dynamics. This local constraint
motivates the obstruction but does not, by itself, prove that the whole
set cannot be made attracting.

One may also try to stabilize the reversed motion by adding latent
variables. Consider an autonomous generator $\dot z=G(z)$ with flow $\Psi_t$,
for example on $\Real^D$. We require a compact attracting set
$\Lambda$ in the full generator state space, so attraction includes
all latent variables. A continuous readout $\pi:\Lambda\to A$ converts the generator
state into the target state. The condition $\pi(\Lambda)=A$
ensures that every target state is represented. We also require
every generator trajectory on $\Lambda$ to produce the prescribed
reversed target trajectory:
\[
\pi(\Psi_t(z))=\phi_{-t}(\pi(z))
\qquad(z\in\Lambda,\ t\in\mathbb R).
\]
This formulation is relevant to recurrent neural networks (RNNs),
whose internal state can have many more dimensions than the observed
dynamics. RNNs can predict chaotic trajectories and reproduce long-term
statistics~\cite{Jaeger2004,Pathak2018,LuHuntOtt2018}. Learning dynamics
for autonomous generation aims to acquire an internal dynamical realization
that continues to produce the prescribed behavior after external input is
removed~\cite{LuBassett2020}. Delay-embedding and reservoir-embedding
results show that, under suitable hypotheses, histories of observations
or states of a driven reservoir can uniquely identify the underlying
dynamical state~\cite{Takens1981,Sauer1991,Hart2020}. Such a representation
does not by itself guarantee attraction when the model runs autonomously.
Our theorem identifies a geometric obstruction to exact attracting
realization within the stated generator class. The reservoir-computing
examples illustrate the distinction between driven tracking and autonomous
generation for forward and reversed Lorenz signals.

The same existence question arises in autonomous feedback control. With full state
$z=(x,v)$, consider
\[
\dot x=-F(x)+R(x,v),\qquad \dot v=H(x,v).
\]
Here the readout is the coordinate projection $\pi(x,v)=x$. We ask whether
$R$ and $H$ can produce a compact attracting set $\Lambda$ with
$\pi(\Lambda)=A$ and $R(x,v)=0$ at every point of $\Lambda$. The latter
condition preserves the exact reversed motion, while attraction
concerns the joint state $z=(x,v)$. 

We prove an obstruction to such exact reversed stable generation by any
finite-dimensional system with $C^2$ autonomous vector field, compact
attracting set, and continuous readout. The obstruction applies if,
for some $\tau>0$, the time-$\tau$ map of the target flow has positive
metric entropy with
respect to an ergodic invariant probability measure $\mu$ and has totally
disconnected local stable fibers (defined in Eq.~\eqref{eq:fiber})
on a subset $E\subset A$ with positive measure $\mu(E)>0$.

The proof proceeds by contradiction. We construct an ergodic invariant
measure $\nu$ for the hypothetical generator system with $\pi_*\nu=\mu$.
For almost every generator state whose output lies in $E$, attraction and
continuity force the readout to be constant on each complete unstable leaf.
Target ergodicity extends the fiber condition to $\mu$-almost every target
point, giving whole-leaf collapse at $\nu$-almost every generator state.
Ledrappier-Young's Theorem~B~\cite{LY1985} then forces the output entropy
to vanish, contradicting the prescribed positive entropy.

The classical Lorenz attractor satisfies these conditions. Its forward
equations provide an attracting realization, whereas our theorem
excludes an exact and stable reversed generation in the specified 
smooth class of generator systems, even with additional latent variables. 
This establishes an \emph{arrow of time in stable autonomous generation}
for Lorenz dynamics and, more generally, for smooth attracting systems
satisfying the theorem's hypotheses. 

Section~\ref{sec:setup} states the obstruction and its flow corollary,
and Section~\ref{sec:proof} gives the complete proof.
Section~\ref{sec:lorenz} establishes the Lorenz application and presents
the reservoir-computing examples. Section~\ref{sec:discussion} discusses
the implications for learning and control.

\section{The obstruction}
\label{sec:setup}

We first state the obstruction for discrete-time dynamics. Let
$f:A\to A$ be a homeomorphism of a compact metric space with
metric $d_A$. The generator evolves on a finite-dimensional smooth
manifold $M$, equipped with a smooth Riemannian metric and its
associated distance $d_M$, and has a compact attracting set
$\Lambda\subset M$. No smooth-manifold structure is assumed for
$\Lambda$; derivatives and tangent directions are taken in the
ambient manifold $M$. We write $x\in A$ for a target state and
$z\in\Lambda$ for a generator state on the attracting set. The
invariant target measure is denoted by $\mu$; an invariant generator
measure $\nu$ will be constructed in the proof.

For a radius $\rho>0$, the \emph{intrinsic local forward-stable fiber}
of $x$ consists of points in $A$ whose forward trajectories remain
within distance $\rho$ of the trajectory of $x$ and converge toward it:
\begin{equation}
\Cs(x)=\left\{x'\in A:
\begin{aligned}
&d_A(f^n(x'),f^n(x))<\rho\;(n\in\mathbb Z_{\geq 0}),\\
&d_A(f^n(x'),f^n(x))\longrightarrow0
  \;\text{as }n\to\infty
\end{aligned}\right\}.
\label{eq:fiber}
\end{equation}
The convergence is synchronous: the two trajectories are compared
after the same number of iterations. The term \emph{intrinsic}
emphasizes that the fiber is defined entirely within $A$, using
only $f$ and $d_A$.

A set is \emph{totally disconnected} if every connected subset has
at most one point. For Lorenz dynamics, Sec.~\ref{sec:lorenz}
connects the intrinsic fibers to the geometry of strong-stable
manifolds. Specifically, for suitable $\rho,r>0$ and a set
$E\subset A$ with $\mu(E)>0$, we have
\[
\Cs(x)\subseteq \Wss_{\phi,r}(x)\cap A
\qquad(x\in E).
\]
Here $\Wss_{\phi,r}(x)$ is a local strong-stable curve of
intrinsic radius $r$ through $x$ in the surrounding state space.
For Lorenz dynamics, its intersection with $A$ is totally
disconnected, so the contained fiber $\Cs(x)$ is also totally
disconnected for $x\in E$. Cantor sets and singletons are
examples of totally disconnected sets. This inclusion and the 
total disconnectedness are justified in Sec.~\ref{sec:lorenz-application}.

We write $h_\mu(f)$ for the Kolmogorov-Sinai metric entropy.
For an invertible measure-preserving map,
$h_\mu(f^{-1})=h_\mu(f)$. The theorem concerns cases in which this
common entropy value is positive.

We ask whether a generator admits a compact attracting set $\Lambda$
on which it realizes the prescribed reversed motion. We express
attraction of the generator system through a trapping neighborhood $U$. 
Specifically, for the generator diffeomorphism $g:M\to M$, let $U\subset M$
be a nonempty open neighborhood of $\Lambda$, and write
$\overline U$ for its closure in $M$. We assume that $\overline U$ is compact
and require
\begin{equation}
g(\overline U)\subset U,\qquad
\Lambda=\bigcap_{n\ge0}g^n(\overline U).
\label{eq:attractor}
\end{equation}
The first condition makes $U$ a trapping neighborhood. The successive
images of $\overline U$ form nested nonempty compact sets, so $\Lambda$
is nonempty, compact, and invariant, $g(\Lambda)=\Lambda$.
Trajectories starting in $U$ approach $\Lambda$.

Attraction concerns distance to $\Lambda$ in the full generator
state space, including all latent variables, while trajectories within
$\Lambda$ may still be chaotic.
This trapping condition is important for the proof: every complete
unstable leaf through a regular generator state in $\Lambda$ stays in
$\Lambda$. The readout $\pi$ can therefore be applied along each whole
leaf. Lemma~\ref{lem:trapped} proves this containment from
Eq.~\eqref{eq:attractor}.

With these definitions, we first state the obstruction for invertible
maps and then derive a corollary for autonomous flows.

\begin{theorem}[Positive-measure stable-fiber obstruction]
\label{thm:main}
Let $f:A\to A$ be a homeomorphism of a compact metric space $A$
preserving an ergodic Borel probability measure $\mu$ with
$h_\mu(f)>0$. Suppose there are $\rho>0$ and a Borel set
$E\subset A$ with $\mu(E)>0$ such that $\Cs(x)$ is totally
disconnected for every $x\in E$.

Let $g:M\to M$ be any $C^2$ diffeomorphism on a finite-dimensional
compact smooth manifold $M$ without boundary. If
$\Lambda\subset M$ is an attracting set satisfying
Eq.~\eqref{eq:attractor}, then no continuous surjection
$\pi:\Lambda\to A$ can satisfy
\begin{equation}
\pi\circ g=f^{-1}\circ\pi
\qquad\text{on }\Lambda.
\label{eq:factor}
\end{equation}
\end{theorem}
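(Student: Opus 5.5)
We argue by contradiction, following the outline in the introduction. Suppose a continuous surjection $\pi:\Lambda\to A$ satisfies Eq.~\eqref{eq:factor}.

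\textbf{Step 1: lift $\mu$.} Since $\pi$ is surjective, the set of $g$-invariant Borel probability measures $\eta$ on $\Lambda$ with $\pi_*\eta=\mu$ is nonempty. To see this, take any lift of $\mu$, average it along $g$-orbits, and pass to a weak-$*$ limit; the factor relation makes $\pi_*$ commute with $g_*$ and $f^{-1}_*$. This set is also convex and compact. Because $\mu$ is ergodic for $f^{-1}$, $\mu$ is an extreme point, so an extreme point $\nu$ of this fiber is ergodic for $g$ and satisfies $\pi_*\nu=\mu$. The factor map then gives
\[
h_\nu(g)\ge h_\mu(f^{-1})=h_\mu(f)>0.
\]
By Ruelle's inequality, $\nu$ therefore has positive Lyapunov exponents. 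Pesin theory applies because $g$ is $C^2$ and $M$ is compact. It supplies, for $\nu$-a.e.\ $z$, a global unstable manifold $\Wu(z)$: an immersed connected Euclidean space consisting of the points whose backward orbits converge exponentially to that of $z$.

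\textbf{Step 2: leaves stay in $\Lambda$ (Lemma~\ref{lem:trapped}).} Let $w\in\Wu(z)$. For large $n$, $g^{-n}(w)$ lies near $g^{-n}(z)\in\Lambda$, and hence inside $U$. By Eq.~\eqref{eq:attractor}, $w=g^n(g^{-n}w)\in g^n(\overline U)$ for all large $n$, and the sets $g^n(\overline U)$ are nested. Hence $w\in\Lambda$, so $\Wu(z)\subset\Lambda$ and $\pi$ is defined on the whole leaf.

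\textbf{Step 3: collapse on leaves.} Fix a typical $z$ and let $w\in\Wu(z)$. Then $d_M(g^{-n}w,g^{-n}z)\to0$. Since $\pi$ is uniformly continuous on the compact set $\Lambda$, we get
\[
d_A\bigl(f^{n}\pi(w),f^{n}\pi(z)\bigr)=d_A\bigl(\pi g^{-n}w,\pi g^{-n}z\bigr)\to0.
\]
Points of $\Wu(z)$ close to $z$ along the leaf have backward orbits that stay uniformly close for all $n\ge0$, by the Pesin local unstable manifold estimates. Using uniform continuity again, there is a neighborhood $V$ of $z$ in the leaf, which we may take connected, with $\pi(V)\subset\Cs(\pi(z))$.

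Now suppose $\pi(z)\in E$. The set $\pi(V)$ is connected and lies in the totally disconnected set $\Cs(\pi(z))$, so it is a point. Thus $\pi$ is locally constant on $\Wu(z)$ near each $z$ with $\pi(z)\in E$.

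To upgrade this to the whole leaf, use ergodicity. The set $\{z:\pi(z)\in E\}$ has $\nu$-measure $\mu(E)>0$. By Birkhoff's theorem, for $\nu$-a.e.\ $z$ and every $w\in\Wu(z)$, the backward iterates $g^{-n}$ map the leaf segment from $z$ to $w$ into arbitrarily small leaf neighborhoods of points $g^{-n}z$ with $\pi(g^{-n}z)\in E$. This is where care is needed: the size of $V$ must be uniform on a Pesin set of positive measure intersected with $\pi^{-1}(E)$. On that set, $\pi$ is constant on leaf balls of a fixed radius $\delta$. Pulling back the segment by $g^{-n}$ along recurrence times into this set, its image has length less than $\delta$, so $\pi$ takes equal values at its endpoints. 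Applying $f^{-n}$ then gives $\pi(w)=\pi(z)$. Hence $\pi$ is constant on each complete unstable leaf for $\nu$-a.e.\ $z$.

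\textbf{Step 4: entropy contradiction.} The map $\pi$ is constant on unstable leaves, so the measurable partition $\xi$ subordinate to $\Wu$ is finer than $\pi^{-1}\epsilon$, where $\epsilon$ is the partition into points. Equivalently, the conditional measures of $\nu$ on unstable leaves project under $\pi$ to Dirac masses. Ledrappier–Young Theorem~B gives $h_\nu(g)=h_\nu(g,\xi)$: the entropy is carried entirely by the unstable conditionals. Relativizing this to the factor, the entropy of the factor is bounded by the entropy generated along the unstable leaves after pushforward. Since that pushforward is trivial, $h_\mu(f^{-1})=0$. I would make this rigorous by showing that $\pi^{-1}\mathcal B_A$ lies in the Pinsker-type $\sigma$-algebra $\bigcap_n g^{-n}\xi$. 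Theorem~B identifies this $\sigma$-algebra, up to null sets, with a zero-entropy $\sigma$-algebra, since $h_\nu(g,\xi)=h_\nu(g)$ means the tail of $\xi$ carries no additional entropy. This contradicts $h_\mu(f)>0$.

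\textbf{Main obstacles.} The hardest steps are the uniformization in Step 3, which passes from the local fiber condition on $E$ to whole-leaf constancy almost everywhere, and the precise invocation of Theorem~B in Step 4. For the latter, I would first try the Pinsker-algebra formulation: the partition into unstable leaves generates a $\sigma$-algebra whose tail is the Pinsker algebra.
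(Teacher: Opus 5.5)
Your proposal is correct. Steps 1, 2 and 4 match the paper: taking an extreme point of the convex set of $g$-invariant lifts of $\mu$ is equivalent to the paper's ergodic-decomposition argument, and your rigorous version of Step 4 is the same appeal to Ledrappier--Young's identification $\Bu(g)=\Pin(g)$ modulo $\nu$.

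The genuine difference is the passage from local collapse to whole-leaf collapse. You pull a leaf path from $z$ to $w$ back by $g^{-n}$ until its intrinsic length falls below a radius $\delta$ that works uniformly on $\Lambda_\ell\cap\pi^{-1}(E)$, where $\Lambda_\ell$ is a Pesin block. This uses backward recurrence to that block at arbitrarily large times. It requires uniform local unstable disks and contraction constants on Pesin blocks, the fact that a compact leaf path lies in some $g^k(W^u_{\mathrm{loc}}(g^{-k}z))$, and a comparison of intrinsic leaf balls with local disks.

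The paper needs no uniformity. If $x=\pi(z)$ has a totally disconnected fiber, each $q$ in the level set $Z_x=\{q\in\Wu_g(z):\pi(q)=x\}$ has some connected leaf neighborhood, of non-uniform size, mapped into the same fiber $\Cs(x)$. Hence $Z_x$ is open and closed in the connected leaf. Separately, the inclusion $f^n(\Cs(x))\subseteq\Cs(f^n(x))$ transfers total disconnectedness from $E$ to the full-measure set $\bigcup_{n\ge0}f^{-n}(E)$. So a single visit to $E$ suffices, where you need infinitely many visits to a uniform block. Since $\pi(g^{-n}z)=f^n(\pi(z))$, both arguments exploit the same time direction. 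Your route is the standard contract-and-recur argument of smooth ergodic theory; the paper's is more elementary, using only pointwise local estimates and connectedness.

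One slip in Step 4. For an increasing partition $\xi$ subordinate to unstable manifolds, $\bigcap_{n\ge0}g^{-n}\sigma(\xi)=\sigma(\xi)$, so the tail whose atoms grow to complete leaves is $\bigcap_{n\ge0}g^{n}\sigma(\xi)$. Also, the needed fact is not that $\xi$ refines $\pi^{-1}\epsilon$, which encodes only local collapse. It is that each $\pi^{-1}(B)$ is, modulo $\nu$, a union of complete unstable leaves, and hence lies in $\Bu(g)$.
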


Equation~\eqref{eq:factor} is the \emph{factor relation} for time reversal:
one generator step produces one backward step in the target dynamics at
every point of $\Lambda$. Invertibility and invariance give
\[
\pi(g^k(z))=f^{-k}(\pi(z))\qquad(z\in\Lambda,\ k\in\mathbb Z).
\]
The identity holds at every point of $\Lambda$.

Ergodicity is assumed only for the target measure $\mu$.
Under the factor relation, the proof constructs an ergodic invariant
measure $\nu$ on $\Lambda$ with $\pi_*\nu=\mu$.
Neither uniqueness of this $\nu$ nor full support on $\Lambda$ is required.

Total disconnectedness is required of the entire intrinsic stable
fiber $\Cs(x)\subset A$ for each $x\in E$, where $\mu(E)>0$;
$E$ need not be invariant. The hypotheses allow zero Lyapunov exponents
in some directions and hidden chaotic dynamics in the generator.
No differentiability or injectivity is required of the continuous readout.

Theorem~\ref{thm:main} assumes that the generator's ambient manifold
$M$ is compact and without boundary, matching the hypotheses of Ledrappier-Young's
Theorem~B, used in the proof~\cite{LY1985}.
Many autonomous flows, however, are defined on Euclidean space
$\Real^D$, which has no boundary but is noncompact. We can still obtain
an obstruction in this setting: the construction below preserves the
dynamics near $\Lambda$ and the trapping of nearby states while
modifying the vector field farther away. It produces a flow on a new
compact manifold without boundary, of the same dimension. The following
corollary therefore allows the original state space to be noncompact,
while retaining a compact attracting set $\Lambda$ and the stated
trapping assumptions.

\begin{corollary}[Autonomous flows]
\label{cor:flow}
Suppose a target flow $\phi_t|_A$ satisfies the target hypotheses of
Theorem~\ref{thm:main} for $f=\phi_\tau|_A$ at some $\tau>0$.
Let $G$ be a $C^2$ autonomous vector field on a finite-dimensional
smooth manifold $M$ without boundary, which need not be compact,
and let $\Psi_t$ denote its flow. Suppose $\Lambda\subset M$ is
compact and flow invariant and satisfies Eq.~\eqref{eq:attractor}
for $g=\Psi_\tau$ and an open neighborhood $U$ with compact closure.
Assume $\Psi_t$ is defined on $\overline U$ for $0\le t\le\tau$.
Then no continuous surjection $\pi:\Lambda\to A$ can satisfy
\begin{equation}
\pi\circ\Psi_t=\phi_{-t}\circ\pi
\qquad\text{on }\Lambda,\ t\in\Real.
\label{eq:flowfactor}
\end{equation}
\end{corollary}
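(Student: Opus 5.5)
The plan is to reduce Corollary~\ref{cor:flow} to Theorem~\ref{thm:main} with $f=\phi_\tau|_A$ and $g=\Psi_\tau$. Setting $t=\tau$ in Eq.~\eqref{eq:flowfactor} gives $\pi\circ\Psi_\tau=\phi_{-\tau}\circ\pi=f^{-1}\circ\pi$ on $\Lambda$, which is exactly Eq.~\eqref{eq:factor}. The target hypotheses are assumed for $f$: $\mu$ is ergodic, $h_\mu(f)>0$, and the fibers $\Cs(x)$ are totally disconnected on $E$. So the only gap is that Theorem~\ref{thm:main} requires $g$ to be a $C^2$ diffeomorphism of a compact boundaryless manifold. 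Here, however, $M$ may be noncompact and $\Psi_\tau$ is only defined on $\overline U$. I would argue by contradiction: assume such a $\pi$ exists, and build a compact boundaryless manifold $N$ of dimension $\dim M$ with a $C^2$ diffeomorphism $\tilde g$. This $\tilde g$ should agree with $\Psi_\tau$ on a neighborhood of $\overline U$ (or at least of $\Lambda$), and should keep a trapping neighborhood whose nested images still intersect in $\Lambda$.

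The construction proceeds in three steps.

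\textbf{Step 1.} Since $\overline U$ is compact and $\Psi_t$ is defined on $\overline U$ for $0\le t\le\tau$, the set $K=\bigcup_{0\le t\le\tau}\Psi_t(\overline U)$ is compact. By openness of the flow domain, $\Psi_t$ is defined on a slightly larger open set for these times. I would choose a relatively compact open $W$ with $K\subset W$ and $\overline W$ a smooth compact domain with boundary in $M$; this can be done by taking a sublevel set of a smooth exhaustion-type function at a regular value.

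\textbf{Step 2.} Let $\chi$ be a smooth bump function equal to $1$ on a neighborhood of $K$ and supported in $W$. Put $\tilde G=\chi G$; this is $C^2$ and compactly supported in $W$. Then double $\overline W$ along its boundary to obtain a compact boundaryless manifold $N$. Extend $\tilde G$ by zero to $N$; it stays $C^2$ because it vanishes near $\partial W$. Its flow $\tilde\Psi_t$ is complete, and $\tilde g:=\tilde\Psi_\tau$ is a $C^2$ diffeomorphism of $N$.

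\textbf{Step 3.} For $y\in\overline U$ the trajectory $\Psi_t(y)$, $0\le t\le\tau$, stays in $K$, where $\tilde G=G$. By uniqueness of solutions, $\tilde\Psi_t(y)=\Psi_t(y)$ on that interval. Hence $\tilde g=\Psi_\tau$ on $\overline U$, and regarding $U\subset W\subset N$ we get $\tilde g(\overline U)=\Psi_\tau(\overline U)\subset U$. Moreover $\bigcap_n\tilde g^n(\overline U)=\bigcap_n\Psi_\tau^n(\overline U)=\Lambda$, because all iterates stay inside $\overline U$, where the two maps coincide. So Eq.~\eqref{eq:attractor} holds for $\tilde g$ on $N$. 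The relation $\pi\circ\tilde g=f^{-1}\circ\pi$ on $\Lambda$ is inherited, and $\pi$ is still a continuous surjection onto $A$. Theorem~\ref{thm:main} now yields a contradiction.

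The main obstacle is the smoothing in Step 1. We need a compact domain $\overline W$ with smooth boundary containing the compact set $K$ inside the open flow domain. We also need the flow of $G$ to exist slightly past $\overline U$, so that $\chi\equiv1$ on an open set containing all relevant trajectories and the uniqueness comparison is legitimate. I would handle this as follows: pick a smooth proper function $h\ge 0$ on $M$ (or on the open set where the time-$\tau$ flow is defined), take $c$ large with $K\subset\{h<c\}$, and use Sard's theorem to choose $c$ a regular value. The doubling of a compact manifold with smooth boundary is a standard compact manifold without boundary. Only the interior copy matters, since the vector field vanishes near the seam.
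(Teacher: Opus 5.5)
Your proposal is correct and follows the paper's proof essentially step for step. Both use the compact orbit tube $K$, a smooth compact domain containing it, a cutoff of $G$ equal to one near $K$, and doubling along the boundary with the field extended by zero; uniqueness of solutions then gives $\tilde g=\Psi_\tau$ on $\overline U$, which carries both the trapping condition and the factor relation at $t=\tau$ into Theorem~\ref{thm:main}, while your worry about extending the flow slightly past $\overline U$ is unnecessary, since $\Psi_t(y)\in K$ for $y\in\overline U$, $0\le t\le\tau$, and $\tilde G=G$ on $K$ already suffice for the uniqueness comparison.
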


\begin{proof}
Suppose such a realization exists. We will construct a flow on a
compact manifold without boundary that preserves the same attracting
realization. Its time-$\tau$ map will then contradict
Theorem~\ref{thm:main}. If $M$ is already compact, no modification is
needed: its flow is complete, and $g=\Psi_\tau$ is a $C^2$
diffeomorphism satisfying Eq.~\eqref{eq:factor}.

For noncompact $M$, we must preserve both the motion on $\Lambda$
and the trapping of nearby states. The compact orbit tube
\[
K=\{\Psi_t(z):z\in\overline U,\ 0\le t\le\tau\}
\]
contains every state reached from $\overline U$ during one sampling
interval. Choose a compact smooth domain $N\subset M$ with smooth
boundary $\partial N$ and $K\subset\operatorname{int}N$, where
$\operatorname{int}N=N\setminus\partial N$ denotes its interior in $M$.
Such a domain exists because $K$ is compact.

Choose a modified $C^2$ vector field $G'$ on $N$ that equals $G$
on a neighborhood of $K$ and vanishes on a neighborhood of $\partial N$.
This is achieved by multiplying $G$ by a smooth cutoff equal to one
near $K$ and zero near the boundary.

Theorem~\ref{thm:main} requires an ambient manifold without boundary.
Glue two copies of $N$ along their boundaries to obtain such a compact
smooth manifold $M'$, of the same dimension as $M$. Extend $G'$ by zero
on the second copy. Since the field vanishes near the shared boundary,
it is $C^2$ on $M'$ and has a complete flow $\Psi'_t$. Consequently
$g'=\Psi'_\tau$ is a $C^2$ diffeomorphism of $M'$.

Identifying $\overline U$ with its copy in $M'$, uniqueness of
solutions gives $g'=\Psi_\tau$ on $\overline U$.
Because $\Psi_\tau(\overline U)\subset U$, all subsequent sampled
images also agree. Thus $g'$ has the same trapping neighborhood and
attracting set $\Lambda$. Every trajectory in $\Lambda$ stays in
$\Lambda\subset K$, so $\Psi'_t=\Psi_t$ there for all $t\in\Real$.
In particular, the same readout satisfies
\[
\pi\circ g'=f^{-1}\circ\pi\qquad\text{on }\Lambda.
\]
This attracting realization on a compact boundaryless manifold
contradicts Theorem~\ref{thm:main}.
\end{proof}

\section{Proof of Theorem~\ref{thm:main}}
\label{sec:proof}

We argue by contradiction. Suppose such a realization exists. The argument has four
steps: lift the measure, collapse complete leaves whose outputs lie in $E$,
extend the fiber condition by target ergodicity, and apply the Pinsker theorem.

\subsection{An ergodic lift with positive output entropy}

We construct an ergodic invariant generator measure $\nu$ projecting
to $\mu$. With respect to this measure, the prescribed output has
positive metric entropy, and therefore so does the generator.

We first choose a probability measure $\nu_0$ on $\Lambda$ with
$\pi_*\nu_0=\mu$, using a measurable right inverse of $\pi$.
This initial measure need not be invariant or ergodic.
By compactness, its time averages have a weakly convergent
subsequence:
\[
\bar\nu
=\lim_{j\to\infty}\frac{1}{N_j}
  \sum_{k=0}^{N_j-1}g^k_*\nu_0,
\qquad N_j\to\infty.
\]
Time averaging makes $\bar\nu$ invariant under $g$, while the
factor relation and invariance of $\mu$ ensure that
$\pi_*\bar\nu=\mu$.

The ergodic decomposition theorem expresses $\bar\nu$ as a
weighted mixture of ergodic invariant probability measures:
\[
\bar\nu=\int\nu_\omega\,d\kappa(\omega),
\]
where $\kappa$ gives the weights of the components.
Projecting this mixture gives
\[
\mu=\pi_*\bar\nu
   =\int\pi_*\nu_\omega\,d\kappa(\omega).
\]
Each projected measure $\pi_*\nu_\omega$ is invariant under
$f^{-1}$ by the factor relation. Since $\mu$ is ergodic,
it cannot be a nontrivial mixture of distinct $f^{-1}$-invariant
probability measures. Therefore $\pi_*\nu_\omega=\mu$ for
$\kappa$-almost every $\omega$. Choose one such component
and denote it by $\nu$; it is ergodic, invariant under $g$,
and has the prescribed output distribution $\mu$.

For every finite measurable partition $\alpha$ of $A$, the
factor relation and $\pi_*\nu=\mu$ give
\begin{equation}
h_\nu(g,\pi^{-1}(\alpha))=h_\mu(f^{-1},\alpha).
\label{eq:partitionentropy}
\end{equation}
Thus the output factor has entropy
\begin{equation}
h_{\rm out}=h_\mu(f^{-1})=h_\mu(f)>0,
\qquad h_\nu(g)\ge h_{\rm out},
\label{eq:houtpos}
\end{equation}
using invariance of metric entropy under inversion~\cite{Walters1982}.

\subsection{Whole-leaf collapse at centers over \texorpdfstring{$E$}{E}}
\label{sec:localcollapse}

We prove that $\pi$ is constant on every complete unstable leaf through
a regular generator state whose output belongs to $E$.

Oseledets' and Pesin's theorems give a $g$-invariant Borel set
$\Gamma\subset\Lambda$ of full $\nu$-measure on which the Lyapunov
splittings and unstable manifolds are defined
\cite{Oseledets1968,Pesin1976,BarreiraPesin2007}. We call these states regular. 
Since $\nu$ is ergodic, Ruelle's inequality (Ref.~\cite{Ruelle1978}, Theorem~2(b), pp.~84--85)
and Eq.~\eqref{eq:houtpos} imply that at least one Lyapunov exponent is positive at $\nu$-almost every
generator state. Zero exponents in other directions are allowed. 
The complete unstable leaf through $z\in\Gamma$ is
\begin{multline}
\Wu_g(z)=\bigl\{z'\in M:\\
\limsup_{n\to\infty}\tfrac1n\log\!\bigl(d_M(g^{-n}(z'),g^{-n}(z))\bigr)<0\bigr\},
\label{eq:completeleaf}
\end{multline}
where we use the convention $\log 0=-\infty$ to include $z$ itself in the leaf.
Here $d_M$ is the ambient Riemannian distance on the generator manifold
$M$, introduced in Sec.~\ref{sec:setup}. The leaf consists of generator
states whose backward trajectories approach that of $z$ exponentially
(Ref.~\cite{Ruelle1979}, Remark~6.4(b), p.~53, applied to $g^{-1}$).

The readout $\pi:\Lambda\to A$ is defined on $\Lambda$, whereas
$\Wu_g(z)$ is initially defined as a subset of $M$. To apply $\pi$
along the whole leaf, we first show that every point of $\Wu_g(z)$
lies in $\Lambda$.

\begin{lemma}[Trapping of complete unstable leaves]
\label{lem:trapped}
For every $z\in\Gamma$, $\Wu_g(z)\subset\Lambda$.
\end{lemma}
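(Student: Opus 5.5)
Fix $z\in\Gamma\subset\Lambda$ and a point $z'\in\Wu_g(z)$. I will show $z'\in g^n(\overline U)$ for every $n\ge0$; by Eq.~\eqref{eq:attractor} this gives $z'\in\Lambda$. The idea is that the backward orbit of $z'$ shadows the backward orbit of $z$, which lies in $\Lambda$. Any point that stays in the open set $U$ under all backward iterates must lie in every forward image of $\overline U$.

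First I produce a uniform margin. Since $\Lambda$ is compact and contained in the open set $U$, there is $\delta>0$ such that the $\delta$-neighborhood $\{w\in M: d_M(w,\Lambda)<\delta\}$ is contained in $U$. This uses the positive distance from the compact set $\Lambda$ to the closed set $M\setminus U$. If $U=M$ the claim is trivial, since then $\overline U=M$ and $g^n(M)=M$.

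Next I use the definition~\eqref{eq:completeleaf}. The $\limsup$ being negative gives some $c>0$ and $N$ with
\[
d_M\bigl(g^{-n}(z'),g^{-n}(z)\bigr)\le e^{-cn}\qquad (n\ge N).
\]
In particular $d_M(g^{-n}(z'),g^{-n}(z))\to0$. Since $g(\Lambda)=\Lambda$ and $g$ is invertible on $\Lambda$, we have $g^{-n}(z)\in\Lambda$ for all $n$. Hence there is $N'$ with $g^{-n}(z')\in U$ for all $n\ge N'$.

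Finally, fix $n\ge0$ and choose $m\ge\max(n,N')$. Then $w:=g^{-m}(z')\in U\subset\overline U$, so $z'=g^{m}(w)\in g^m(\overline U)$. The sets $g^k(\overline U)$ are nested decreasing, because $g(\overline U)\subset U\subset\overline U$. Therefore $z'\in g^m(\overline U)\subset g^n(\overline U)$. Since $n$ was arbitrary, $z'\in\bigcap_{n\ge0}g^n(\overline U)=\Lambda$.

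The only delicate point is the nesting argument together with the fact that the leaf is defined through backward iterates under the global diffeomorphism $g$ of $M$. Invertibility of $g$ on all of $M$ is what makes $g^{-m}(z')$ meaningful; it is guaranteed because $g$ is a diffeomorphism of $M$. No regularity of $z$ beyond $z\in\Lambda$ is actually needed, and the exponential rate is used only to get convergence.
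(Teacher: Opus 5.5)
Your proof is correct and takes essentially the same route as the paper. Backward convergence of $g^{-m}(z')$ to the orbit of $z$ in $\Lambda$ puts these iterates in $U$ for all large $m$, and the nesting of the sets $g^k(\overline U)$ then places $z'$ in every image and hence in $\Lambda$. You are also right that the argument uses only $z\in\Lambda$ and backward convergence, not regularity of $z$ or the exponential rate.
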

\begin{proof}
Fix $z\in\Gamma$ and an arbitrary $z'\in\Wu_g(z)$.
Since $\Lambda$ is invariant, $g^{-n}(z)\in\Lambda$, and
\[
d_M(g^{-n}(z'),\Lambda)
\le d_M(g^{-n}(z'),g^{-n}(z))\longrightarrow0.
\]
Compactness of $\Lambda\subset U$ implies that points sufficiently
close to $\Lambda$ lie in $U$. Hence there is an integer $N$ such that
$g^{-n}(z')\in U$ for every $n\ge N$. Applying $g^n$ gives
\[
z'\in g^n(U)\subset g^n(\overline U)\qquad(n\ge N).
\]
By Eq.~\eqref{eq:attractor},
$g(\overline U)\subset U\subset\overline U$, so the sets
$g^n(\overline U)$ are nested and decreasing. For any $k\ge0$,
choose $n\ge\max\{N,k\}$. Then
\[
z'\in g^n(\overline U)\subset g^k(\overline U).
\]
Thus $z'$ belongs to every such set, and Eq.~\eqref{eq:attractor}
identifies their full intersection:
\[
z'\in\bigcap_{k\ge0}g^k(\overline U)=\Lambda.
\]
Since $z'$ was arbitrary, $\Wu_g(z)\subset\Lambda$.
\end{proof}

\emph{The leaf and its local neighborhoods.}
The unstable-manifold theorem gives $\Wu_g(z)$ a connected manifold
structure. Its inclusion into $M$ is an \emph{injective immersion}:
it is one-to-one with injective derivative. The whole leaf need not
be embedded; portions far apart along the leaf may approach one another
in $M$. We therefore use its \emph{intrinsic topology}, defined by
local coordinate disks within the leaf. These small disks, also called
plaques, are embedded in $M$. Openness below means openness within
the leaf $\Wu_g(z)$, not within $M$. The inclusion into $M$ is continuous,
so the readout restricted to the leaf is continuous in this topology.

\emph{Backward closeness on local patches.}
Fix a regular generator state $z\in\Gamma$ with
$x=\pi(z)\in E$. To place the image of a generator patch in
$C^s_\rho(x)$, we need both convergence and closeness at every
forward target iterate. The factor relation converts these
requirements into backward-time conditions in the generator.

For every $q\in\Wu_g(z)$ and every $\eta>0$, there is a connected
open leaf neighborhood $V_q$ of $q$ such that, for every $v\in V_q$,
\begin{equation}
\begin{aligned}
d_M(g^{-m}(v),g^{-m}(q))&<\eta\qquad(m\ge0),\\
d_M(g^{-m}(v),g^{-m}(q))&\longrightarrow0
\qquad(m\to\infty).
\end{aligned}
\label{eq:localbackward}
\end{equation}
Indeed, $q$ lies in the intrinsic interior of a finite forward
image of a local unstable disk based on the regular backward
orbit of $z$. Choose an open leaf patch $P$ around $q$ with
compact leafwise closure inside that image.
The unstable-manifold estimates give backward convergence to
the orbit of $q$ uniformly over $P$
(Ref.~\cite{Ruelle1979}, Theorems~6.1(b) and 6.3,
pp.~50--53, applied to $g^{-1}$).
Thus the distance in Eq.~\eqref{eq:localbackward} is below $\eta$
for every $v\in P$ and every $m\ge m_0$, for some finite $m_0$.

For the finitely many earlier iterates $0\le m<m_0$,
continuity of $g^{-m}$ lets us shrink $P$ to a connected open
leaf neighborhood $V_q$ on which the same bound holds.
The later-time bounds remain valid because $V_q\subset P$.
This proves Eq.~\eqref{eq:localbackward}.
The point $q$ need not itself be regular; the construction uses
the regular backward orbit of $z$.

\emph{Collapse of a patch around $z$.}
Uniform continuity of $\pi:\Lambda\to A$ supplies $\eta>0$
such that $d_M(a,b)<\eta$ implies
$d_A(\pi(a),\pi(b))<\rho$ for $a,b\in\Lambda$.
Take the patch $V_z$ given by Eq.~\eqref{eq:localbackward}
for this $\eta$.
For every $v\in V_z$, the factor relation gives
\begin{align}
d_A(f^m(\pi(v)),f^m(x))
&=d_A(\pi(g^{-m}(v)),\pi(g^{-m}(z)))<\rho,
\nonumber\\
d_A(f^m(\pi(v)),f^m(x))&\longrightarrow0.
\label{eq:intofiber}
\end{align}
The bound holds for every $m\ge0$, and the limit is as
$m\to\infty$. Thus $\pi(V_z)\subset C^s_\rho(x)$.
Because $V_z$ is connected and $\pi$ is continuous,
$\pi(V_z)$ is connected and contains $x$.
The target fiber $C^s_\rho(x)$ is totally disconnected, so
\[
\pi(V_z)=\{x\}.
\]

\emph{Extending the patch collapse to the whole leaf collapse.}
Define the set of generator states on this leaf that output $x$:
\begin{equation}
Z_x=\{q\in\Wu_g(z):\pi(q)=x\}.
\label{eq:levelset}
\end{equation}
For any $q\in Z_x$, choose its connected open leaf neighborhood
$V_q$ from Eq.~\eqref{eq:localbackward}, using the same $\eta$.
Since $\pi(q)=x$, the calculation in Eq.~\eqref{eq:intofiber},
with $q$ replacing $z$, gives $\pi(V_q)\subset C^s_\rho(x)$.
Connectedness again implies $\pi(V_q)=\{x\}$, hence
$V_q\subset Z_x$.

Every point of $Z_x$ therefore has an open leaf neighborhood
contained in $Z_x$, so $Z_x$ is open in $\Wu_g(z)$. It is also 
closed in the leaf: continuity of $\pi$ implies that the set of leaf 
points mapped to the closed singleton $\{x\}$ is closed.
Since $z\in Z_x$ and the leaf is connected,
$Z_x=\Wu_g(z)$. Consequently,
\begin{equation}
\pi(\Wu_g(z))=\{\pi(z)\}
\qquad(z\in\Gamma,\ \pi(z)\in E).
\label{eq:goodleafcollapse}
\end{equation}

\subsection{Positive measure suffices}
\label{sec:promotion}
Equation~\eqref{eq:goodleafcollapse} establishes whole-leaf
collapse for regular generator states whose outputs lie in $E$.
We now extend this conclusion to $\nu$-almost every generator
state. To do so, we first extend the target fiber condition
from $E$ to $\mu$-almost every target point, then apply the
whole-leaf argument of Sec.~\ref{sec:localcollapse}.

For every $x\in A$ and integer $n\ge0$, the definition in
Eq.~\eqref{eq:fiber} gives
\begin{equation}
f^n\bigl(\Cs(x)\bigr)\subseteq\Cs(f^n(x)).
\label{eq:fibertransport}
\end{equation}
Indeed, if $y\in\Cs(x)$, then for every $k\ge0$,
\[
d_A\bigl(f^k(f^n(y)),f^k(f^n(x))\bigr)
=d_A(f^{n+k}(y),f^{n+k}(x))<\rho,
\]
and these distances tend to zero as $k\to\infty$.
If $f^n(x)\in E$, the right-hand side of
Eq.~\eqref{eq:fibertransport} is totally disconnected, so its subset
$f^n(\Cs(x))$ is also totally disconnected. Since $f^n$ restricts
to a homeomorphism from $\Cs(x)$ onto its image, $\Cs(x)$ must be
totally disconnected as well. Thus the property propagates
\emph{backward} from the visited point $f^n(x)$ to $x$.

The given set $E$ is Borel and has $\mu(E)>0$. By ergodicity and
Birkhoff's theorem, almost every target orbit visits $E$ with
frequency $\mu(E)>0$. Consequently, the Borel set
\begin{equation}
X_E=\bigcup_{n\ge0}f^{-n}(E),\qquad\mu(X_E)=1,
\label{eq:targetvisits}
\end{equation}
consists of points with totally disconnected fibers: for each
$x\in X_E$, some finite iterate $f^n(x)$ belongs to $E$, so the preceding
argument applies. The fiber condition therefore holds on a Borel set
of full $\mu$-measure.

Return now to the generator. Since $\pi_*\nu=\mu$ and
$\nu(\Gamma)=1$, the Borel set
\begin{equation}
\Omega=\Gamma\cap\pi^{-1}(X_E),\qquad\nu(\Omega)=1,
\label{eq:goodreturns}
\end{equation}
consists of regular states $z$ for which $\Cs(\pi(z))$ is totally
disconnected. Applying the argument of
Sec.~\ref{sec:localcollapse} at each such state gives
\begin{equation}
\pi(\Wu_g(z))=\{\pi(z)\}\qquad(z\in\Omega).
\label{eq:leafcollapse}
\end{equation}

\subsection{The output belongs to the zero-entropy factor}

Equation~\eqref{eq:leafcollapse} shows that the readout $\pi$ cannot
distinguish generator states $q$ and $q'$ lying on the same complete unstable
leaf $\Wu_g(z)$ for any $z\in\Omega$. Ledrappier-Young's theorem connects this
loss of distinguishability to zero entropy: measurable information
that is constant along unstable leaves belongs to the
zero-entropy factor. We apply this result to show that the
output has zero metric entropy, even though the generator
itself has positive metric entropy.

Let $\Bu(g)$ be the completed $\sigma$-algebra of measurable unions of complete
unstable leaves, and let $\Pin(g)$ be the Pinsker $\sigma$-algebra, the maximal
zero-entropy factor. Ledrappier-Young's Theorem~B (Ref.~\cite{LY1985}, p.~513) states
\begin{equation}
\Bu(g)=\Pin(g)\qquad\text{modulo }\nu.
\label{eq:LY}
\end{equation}
Its hypotheses are a $C^2$ diffeomorphism of a compact Riemannian manifold
without boundary and an invariant Borel probability measure. It requires neither absolute
continuity of $\nu$ nor nonzero exponents in every direction.

We apply this result to the smooth ambient system $(M,g,\nu)$,
regarding $\nu$ as a measure on $M$ with $\nu(\Lambda)=1$.
The attracting set $\Lambda$ itself need not be a smooth manifold.

Let $\alpha=\{B_1,\ldots,B_r\}$ be any finite Borel partition
of the output space $A$. Its preimage under the readout is
\[
\beta=\pi^{-1}(\alpha)
     =\{\pi^{-1}(B_1),\ldots,\pi^{-1}(B_r)\}.
\]
Here $\pi^{-1}(B_i)=\{z\in\Lambda:\pi(z)\in B_i\}$ denotes
a preimage; no invertibility of $\pi$ is required.
The partition $\beta$ groups generator states according to
their output categories. Since $\nu(\Lambda)=1$, it is also
a partition of $M$ modulo a null set.

We show that every atom of $\beta$ belongs to $\Bu(g)$
modulo $\nu$. Let
\[
\Omega^u=\bigcup_{z\in\Omega}\Wu_g(z).
\]
By Lemma~\ref{lem:trapped},
$\Omega\subset\Omega^u\subset\Lambda$.
Since $\nu(\Omega)=1$, the complement of $\Omega^u$ lies in
the Borel null set $M\setminus\Omega$. Thus $\Omega^u$ is
measurable in the completion of $\nu$.
Equation~\eqref{eq:leafcollapse} says that $\pi$ is constant
on every leaf in this union, and intersecting complete leaves
coincide. Consequently,
$\pi^{-1}(B_i)\cap\Omega^u$ is a measurable union of whole
unstable leaves. It differs from $\pi^{-1}(B_i)$ only within
$M\setminus\Omega$, a set of measure zero. Hence
\begin{equation}
\pi^{-1}(B_i)\in\Bu(g)=\Pin(g)
\quad\text{modulo }\nu,\qquad i=1,\ldots,r.
\label{eq:inclusion}
\end{equation}
By Eq.~\eqref{eq:LY}, every atom of $\beta$ therefore belongs
to the Pinsker $\sigma$-algebra. Recall that the Shannon entropy
of the partition $\beta$ is
\[
H_\nu(\beta)
=-\sum_{D\in\beta}\nu(D)\log\nu(D),
\qquad 0\log 0=0.
\]
The associated entropy rate under $g$ satisfies
\[
h_\nu(g,\beta)
=\lim_{n\to\infty}\frac1n
 H_\nu\!\left(\bigvee_{k=0}^{n-1}g^{-k}\beta\right)
=0,
\]
where $\bigvee_{k=0}^{n-1}g^{-k}\beta$ is the common
refinement that records the partition categories over
$n$ successive iterates.

The factor relation and $\pi_*\nu=\mu$ give, as established
in Eq.~\eqref{eq:partitionentropy},
\begin{equation}
h_\mu(f^{-1},\alpha)
=h_\nu(g,\pi^{-1}(\alpha))
=0.
\label{eq:entropy-factor}
\end{equation}
Since $\alpha$ was an arbitrary finite Borel partition
of $A$, the definition of metric entropy gives
\[
h_{\rm out}=h_\mu(f^{-1})
=\sup_\alpha h_\mu(f^{-1},\alpha)=0.
\]
This contradicts Eq.~\eqref{eq:houtpos} and proves the
theorem.\hfill$\square$

\section{Lorenz dynamics and reservoir-computing examples}
\label{sec:lorenz}

\begin{figure*}[t]
\centering
\includegraphics[width=\textwidth]{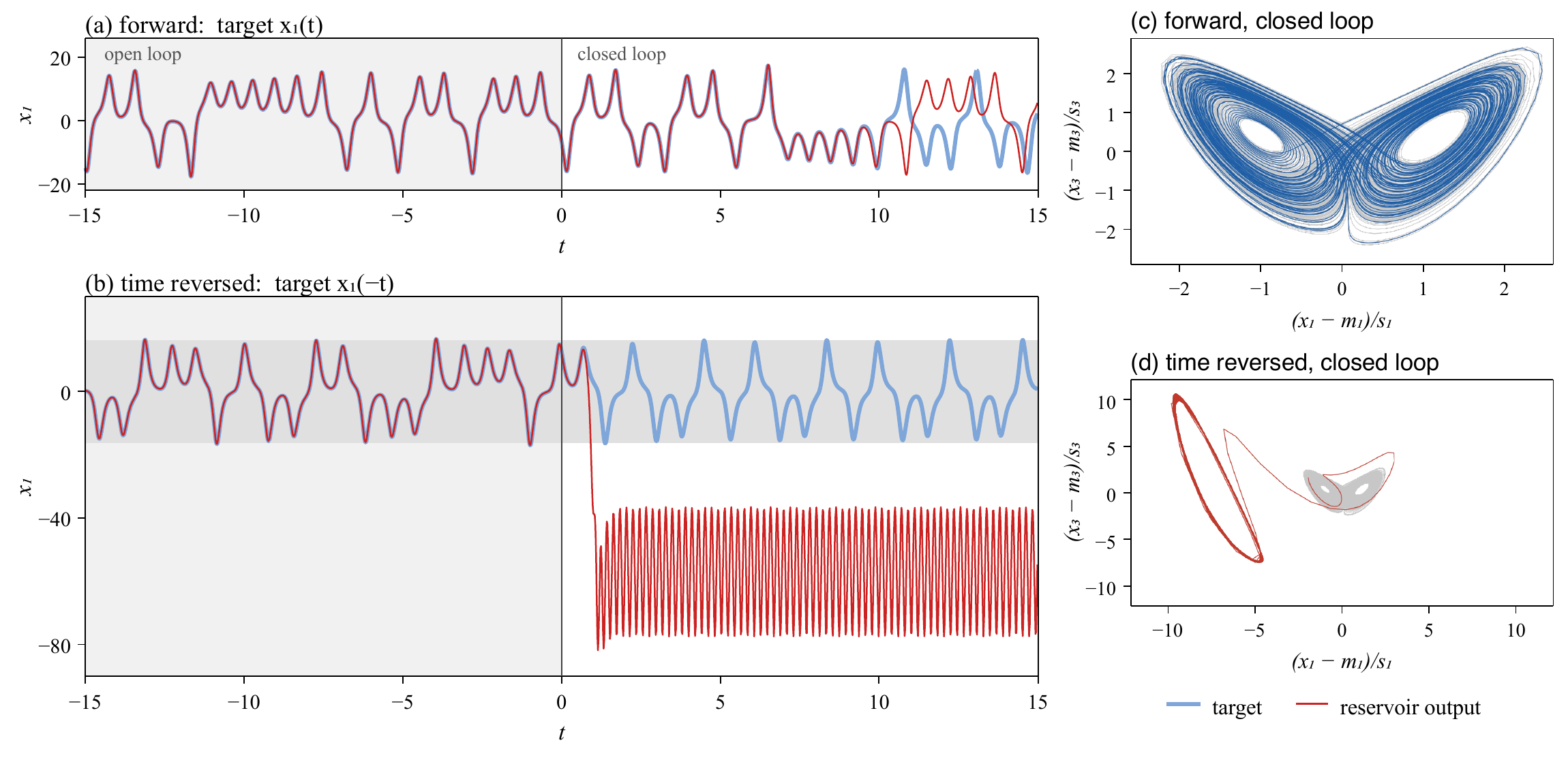}
\caption{Forward and reversed Lorenz signals from paired reservoir
networks ($N=500$, seed $0$). (a,b)~First-coordinate signals in physical
Lorenz units: target in blue, reservoir output in red; the displayed
legend applies to these panels. Light shading marks teacher forcing;
autonomous predictions begin at $t=0$. Vertical ranges differ; the darker
band in (b) marks the target range. (c,d)~Autonomous outputs over a nominal
$80$-time-unit interval: forward in blue, reversed in red, with the target
reference trajectory in gray. The axes show centered and standardized
coordinates $(x_j-m_j)/s_j$. The two panel pairs use separate records and
readout fits; Appendix~\ref{app:numerics} defines $m_j,s_j$ and gives
their protocols.}
\label{fig:lorenz}
\end{figure*}

\subsection{Application to the classical Lorenz attractor}
\label{sec:lorenz-application}

We verify the hypotheses of Theorem~\ref{thm:main} for the classical
Lorenz system,
\begin{equation}
\begin{aligned}
\dot x_1&=10(x_2-x_1),&\quad \dot x_2&=x_1(28-x_3)-x_2,\\
\dot x_3&=x_1x_2-\tfrac83x_3.
\end{aligned}
\label{eq:lorenz}
\end{equation}
Its singular-hyperbolic attractor $A$ carries an SRB measure $\mu$
with full support and positive metric entropy~\cite{Tucker2002,APPV2009}.
The flow is mixing with respect to $\mu$~\cite{LMP2005}, so every
sampled map $f=\phi_\tau|_A$, $\tau>0$, is ergodic. The entropy
scaling formula~\cite{Abramov1959} gives
\begin{equation}
h_\mu(f)=\tau h_\mu(\phi_1)>0.
\end{equation}
We use the Euclidean metric on $A$.

The remaining hypothesis follows from the attractor's stable geometry.
Let $\Wss_{\phi,r}(x)$ denote a local strong-stable curve through $x$,
with radius $r$ measured along the curve. The Lorenz attractor contains
no nondegenerate strong-stable interval~\cite{Morales2007}. Since any
connected subset of this one-dimensional curve containing two distinct
points contains an interval, $\Wss_{\phi,r}(x)\cap A$ is totally
disconnected for every $x\in A$.

To connect this geometry to the sampled dynamics, fix $\tau>0$.
There exist $\rho,r>0$ and a Borel set $E\subset A$ with $\mu(E)>0$
such that
\begin{equation}
\Cs(x)\subset\Wss_{\phi,r}(x)\cap A
\qquad(x\in E).
\label{eq:fiberid}
\end{equation}
Choose $E$ as the $f$-recurrent points in a small open flow box
meeting $A$ whose orbits cross the central part of an adapted section
(Ref.~\cite{APPV2009}, Lemma~2.10); full support and recurrence give
$\mu(E)>0$. Uniform finite-time estimates extend sampled closeness
and convergence to continuous time. Theorem~3.1, applied after a
section crossing with identity time reparametrization, and flow
invariance then place each $x'\in\Cs(x)$,
$x\in E$, on the strong-stable manifold of $\phi_\theta(x)$ for some
$\theta$. Synchronous convergence and recurrence imply
$\phi_\theta(x)=x$. Proposition~2.1 supplies uniform local plaques:
strong-stable contraction forces eventual entry, while sufficiently
close forward trajectories cannot cross a plaque boundary, which
stays uniformly separated from its base point. Choosing $\rho$
sufficiently small therefore proves Eq.~\eqref{eq:fiberid}.
Thus $\Cs(x)$ is a subset of a totally disconnected set for every
$x\in E$. Together with positive entropy and ergodicity, this verifies
the hypotheses of Theorem~\ref{thm:main}. The sampled reversed Lorenz
dynamics and the reversed flow therefore admit no exact attracting
realizations of the kinds excluded by Theorem~\ref{thm:main} and
Corollary~\ref{cor:flow}.

\subsection{Matched reservoir-computing examples}
\label{sec:rc}

To illustrate the contrast between forward and reversed autonomous
generation, we compare reservoir-computing networks trained on the same
Lorenz trajectory in opposite time directions. Within each pair, the
reservoir seed and hyperparameters are matched, with a separately fitted
readout layer for each time direction. Figure~\ref{fig:lorenz} compares teacher-forced prediction
with autonomous rollout; Appendix~\ref{app:numerics} records the
simulation protocol.

Both networks' outputs closely track their prescribed signals during
teacher forcing. However, after the feedback loop is closed, the
forward-time example autonomously generates output that retains the
target's two-lobed geometry over the displayed interval, whereas the
reversed example develops a large oscillation. These finite-time
observations do not establish asymptotic attraction.

The theorem rules out exact compact attracting realizations of
time-reversed Lorenz dynamics within the stated smooth generator class.
The numerical examples concern finite training data and finite-time
generation. The layered geometry discussed in the Introduction suggests
that stabilization preserving the prescribed motion may require fine
spatial variation between nearby parts of the target set. This offers
intuition for the practical difficulty of autonomous generation of
time-reversed data.

\section{Discussion}
\label{sec:discussion}

Our theorem identifies a limit on learning dynamics for stable autonomous
generation. For dynamics satisfying the stated geometric and entropy
hypotheses, no model in the specified finite-dimensional smooth generator
class can realize the exact reversed dynamics on a compact attracting set
through a continuous readout. Increasing the finite latent dimension or
changing the training procedure cannot remove this obstruction. Accurate
fitting and finite-time prediction remain distinct from this requirement
of exact attracting realization.

The same limitation applies to autonomous control through additional
latent variables. The controlled system and its controller together form
a generator. If their joint state must approach a compact attracting set
while the output preserves the prescribed reversed motion exactly, the
same obstruction applies under the stated assumptions. Learning and
motion-preserving control therefore face a common geometric constraint
on the dynamical realizations they can produce.

This result provides a theoretical probe of why the arrow of time can
matter for autonomous generation. Periodic motion and quasiperiodic
rotations admit attracting realizations in both directions, whereas
the Lorenz application exhibits an obstruction in the reversed direction.
This distinction motivates broader questions about learned behavior,
including why some musical patterns are readily produced backward while
others are difficult. The theorem does not classify the difficulty of
finite musical sequences; connecting such behavior to the present result
requires further work on approximation, memory, and representation.
It motivates distinguishing limitations of data and training from
constraints on the existence of a stable autonomous realization.

\begin{acknowledgments}
I wish to thank the Allen Institute founder, Paul G. Allen, 
for his vision, encouragement, and support. I am especially 
grateful to my Ph.D. mentor, Edward Ott, for teaching 
me nonlinear dynamics and shaping the way
I think about dynamical systems. His guidance continues
to influence the questions I ask and how I approach them,
and much of what I learned from him underlies this work.
\end{acknowledgments}

\appendix
\section{Numerical methods}
\label{app:numerics}

The Lorenz equations were integrated by fourth-order Runge--Kutta
with step $\Delta t=0.02$, starting from $(1,1,1)$. After a
$5000$-sample transient (approximately $100$ time units), each
retained record was standardized coordinatewise as
$\widetilde x_j=(x_j-m_j)/s_j$, where $m_j$ and $s_j$ are the mean
and standard deviation of the entire retained record, including
the portion used after training. The reversed sequence is this
same standardized record in reverse order.

Each reservoir has $N=500$ nodes. Its matrix $B$ is a random sparse
matrix with density $3/N$, nonzero weights drawn uniformly from
$[-1,1]$, and spectral radius rescaled to $0.9$. Entries of
$W_{\rm in}\in\mathbb R^{N\times3}$ are uniform on $[-0.1,0.1]$.
The forward and reversed networks use the same reservoir seed
and hyperparameters, with separately fitted readouts. For a
standardized target input $u_k\in\mathbb R^3$, the driven update is
\begin{equation}
z_{k+1}=\tanh\!\left(Bz_k+W_{\rm in}u_k\right),\qquad
\widehat u_{k+1}=W_{\rm out}q(z_{k+1}),
\label{eq:driven-reservoir}
\end{equation}
where $q(z)=[z;z^2]\in\mathbb R^{2N}$ and both squaring and
$\tanh$ act componentwise. This is a nonleaky discrete-time
update; no additional step coefficient or bias is used.

In each direction, training uses the first $25000$ samples of
the ordered record. These training windows therefore come from
opposite ends of the full record. Starting from $z_0=0$, the first
$24999$ inputs produce states $z_1,\ldots,z_{24999}$. After discarding
the first $1000$ states, the $23999$ training pairs are
$(q(z_{k+1}),u_{k+1})$, $k=1000,\ldots,24998$.
Writing their feature and target columns as $X$ and $Y$, respectively,
the readout minimizes
\begin{equation}
\|Y-WX\|_F^2+\eta\|W\|_F^2,
\qquad \eta=10^{-7},
\label{eq:ridge-reservoir}
\end{equation}
without dividing the data-fit term by the number of samples.
Once the output is fed back, the autonomous update is
\begin{equation}
z_{k+1}=\tanh\!\left(Bz_k+W_{\rm in}W_{\rm out}q(z_k)\right).
\label{eq:closedloop}
\end{equation}

The two panel pairs in Fig.~\ref{fig:lorenz} were generated
separately, both with seed $0$. Panels (a,b) use a $31000$-sample
record. After training, each reservoir is reset to zero and driven
by $1300$ subsequent inputs. The $750$ teacher-forced predictions
immediately preceding closure and the next $750$ autonomous predictions are shown,
giving nominal $15$-time-unit windows on either side of loop
closure. The first autonomous prediction is read from the final
driven state; subsequent predictions use output feedback. The
plotted first coordinate is converted back to physical Lorenz units.

Panels (c,d) use a $30510$-sample record with its own normalization
and readout fit. Following training, the reservoir is reset to zero
and driven by $500$ inputs; one further true input is consumed
to form the first autonomous prediction. The rollout contains
$4000$ predictions, spanning nominally $80$ time units. The phase
portraits show standardized first and third coordinates, together
with the first $20000$ standardized target points as a reference.
The implementation stops a rollout if an output is nonfinite or
has a standardized coordinate of magnitude greater than $10^3$,
clips that output at this threshold, and marks subsequent values
as missing.
\bibliography{references}
\end{document}